\newtheorem{theorem}{Theorem}
\theoremstyle{remark}
\newtheorem{definition}{Definition}
\newtheorem*{remark}{Remark}
\newcommand{\ia}{I}
\newcommand{\m}{m}
\newcommand{\N}{\mathbb N}
\newcommand{\Q}{\mathbb Q}
\newcommand{\R}{\mathbb R}
\newcommand{\KP}{\textit{K}}
\newcommand{\IP}{\textit{I}}
\newcommand{\pair}[1]{\langle #1\rangle}
\newcommand{\ave}{\mathbf E}
\newcommand{\lea}{\preccurlyeq}
\newcommand{\eqa}{\asymp}
\title{Proofs of conservation inequalities for Levin's notion of mutual information of 1974}
\author{Nikolay Vereshchagin\thanks{The article
  was prepared within the framework of the HSE University Basic Research Program and funded by the Russian Academic Excellence Project '5-100'. The author
  was in part funded by RFBR according to the research project № 19-01-00563.}\\
Moscow State University
and  HSE University, Russian Federation.}
\begin{document}
\maketitle

\begin{abstract} 
In this paper we consider  Levin's notion of mutual information in infinite 0-1-sequences, as  
defined  in [Leonid Levin.  Laws of Information Conservation (Nongrowth) and Aspects of the Foundation of Probability Theory. Problems of information transmission, vol. 10 (1974),
pp. 206--211]. The respective information conservation  inequalities were stated in that paper without proofs.
Later some proofs appeared in the literature, however  no proof of the 
probabilistic conservation  inequality has been published yet. 
In this paper we prove that inequality and 
for the sake of completeness we present also short proofs of other properties of the said notion.
\end{abstract}

\section{Mutual information in finite strings}

\begin{definition}\label{def1}
For binary strings $x,y\in\Xi$ the notion of mutual information is  defined by the formula
 $$
 \IP(x:y)=\KP(x)+\KP(y)-\KP(x,y).
 $$
 \end{definition}

We use the following notation:
\begin{itemize}
\item $\Xi$ denotes the set of all binary strings. 
 \item $K(x),K(x|y)$ denote Kolmogorov prefix  and Kolmogorov prefix  conditional complexity (for the definition and 
more details see~\cite{lv,suv}).
\item $\KP(x,y)$ denotes $\KP(\pair{x,y})$ where $(x,y)\mapsto \pair{x,y}$
stands for a fixed computable injective mapping from 
$\Xi\times\Xi$ to $\Xi$.
\end{itemize}

It is not hard to prove the following facts called 
\emph{Information Conservation Inequalities:}

                          \begin{theorem} (\cite{PPI})
                          \label{detnongrowthfin}
Let  $A$ be an algorithm. Then for all $x,y\in\Xi$ such that $A(x)$ is defined we have 
$\IP(A(x):y)\le \IP(x:y)+c_A$.
                          \end{theorem}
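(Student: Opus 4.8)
The plan is to pass from complexities to Levin's a priori probability. Writing $\mathbf m$ for the universal lower-semicomputable semimeasure and using the Coding Theorem $\KP(u)=-\log\mathbf m(u)+O(1)$ (valid for strings and, via the pairing $\pair{\cdot,\cdot}$, for pairs), the quantity $\IP(x:y)=\KP(x)+\KP(y)-\KP(x,y)$ becomes $\log\frac{\mathbf m(x,y)}{\mathbf m(x)\,\mathbf m(y)}+O(1)$. Hence, after cancelling the common factor $\mathbf m(y)$, the asserted inequality $\IP(A(x):y)\le\IP(x:y)+c_A$ is equivalent to the multiplicative estimate
\[ \mathbf m(x)\,\mathbf m(A(x),y)\lea\mathbf m(A(x))\,\mathbf m(x,y), \]
where $\lea$ means ``$\le$ up to a factor depending only on $A$''. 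This is a \emph{data-processing} statement: since $A(x)$ is a function of $x$, conditioning on $x$ should be at least as informative about $y$ as conditioning on $A(x)$.

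To prove the displayed estimate I would exhibit a single lower-semicomputable semimeasure on pairs that dominates the left-hand side divided by $\mathbf m(A(x))$, and then invoke the maximality of $\mathbf m$ on pairs. The natural candidate is $\nu(x,y)=\mathbf m(A(x),y)\cdot\mathbf m(x)/S(A(x))$, where $S(a)=\sum_{z\,:\,A(z)=a}\mathbf m(z)$ is the total a priori weight of the fibre $A^{-1}(a)$. Two facts make $\nu$ the right object. First, $S$ is itself a lower-semicomputable semimeasure (each $z$ has a unique image $A(z)$, so $\sum_aS(a)=\sum_z\mathbf m(z)\le1$), whence $S(a)\lea\mathbf m(a)$; combined with the trivial $\mathbf m(x)\le S(A(x))$ this gives the pointwise lower bound $\nu(x,y)\succcurlyeq\mathbf m(x)\,\mathbf m(A(x),y)/\mathbf m(A(x))$, exactly the quantity we must dominate. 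Second, the total mass of $\nu$ is at most $1$: grouping by the value $a=A(x)$, the inner sum $\sum_{x\,:\,A(x)=a}\mathbf m(x)$ equals $S(a)$ and cancels the denominator, leaving $\sum_{a,y}\mathbf m(a,y)\le1$.

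The main obstacle is precisely the lower-semicomputability of $\nu$: the denominator $S(A(x))$ is approximated from below, so the ratio $\mathbf m(x)/S(A(x))$ is \emph{not} monotone in the approximation stage, and one cannot simply read $\nu$ off the running approximations. This is the same delicate point that appears in the nontrivial direction of the symmetry-of-information theorem, and it reflects the fact that passing through the shortest program $A(x)^{*}$ would cost an unwanted $O(\log\KP(A(x)))$ term. I would resolve it by an online (Kraft--Chaitin style) mass redistribution: process the enumerations of $\mathbf m(z)$ and of $\mathbf m(a,y)$ step by step, and whenever a quantum $\delta$ of mass appears at a node $(a,y)$ distribute it over the currently discovered fibre $A^{-1}(a)$ in proportion to the current weights $\mathbf m(z)$. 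The key bookkeeping identity $\sum_{x\,:\,A(x)=a}\mathbf m_t(x)=S_t(a)$ guarantees that each such step allocates exactly $\delta$, so the mass allocated in these steps stays $\le\sum_{a,y}\mathbf m(a,y)\le1$; the surviving technical work is to check that this allocation, together with the corrections made when a weight $\mathbf m(z)$ itself grows, reaches the target value $\mathbf m(x)\,\mathbf m(A(x),y)/\mathbf m(A(x))$ in the limit while keeping the overall budget bounded by a constant. Once $\nu$ is a bona fide lower-semicomputable semimeasure, maximality yields $\nu\lea\mathbf m(x,y)$, hence the displayed estimate, and all the accumulated $O(1)$ and $\log$-of-constant terms are absorbed into $c_A$.
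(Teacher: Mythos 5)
Your reduction of the theorem to the multiplicative estimate $m(x)\,m(A(x),y)\lea m(A(x))\,m(x,y)$ is correct, and your candidate $\nu$ does have total mass at most $1$ and does dominate the required ratio pointwise (via $S(a)\lea m(a)$). The genuine gap is exactly the step you defer as ``surviving technical work'': lower semicomputability of $\nu$, or of any bounded-mass lower semicomputable substitute, and this gap is the entire content of the theorem, not a technicality. Writing $m_t$, $S_t$ for the stage-$t$ approximations: the allocations you actually budget (those triggered by growth of $m(a,y)$) indeed cost at most $\sum_{a,y}m(a,y)\le1$, but by themselves they do not dominate the target --- a fibre element $x'$ whose weight is enumerated only \emph{after} the mass at $(A(x'),y)$ has appeared receives nothing, while its target value $m(x')m(A(x'),y)/S(A(x'))$ can be of order $m(A(x'),y)$. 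So domination rests entirely on the ``corrections'' triggered by growth of fibre weights, and those have unbounded total cost: an increment $\epsilon_i$ at $z$ with $A(z)=a$ forces you to add, irrevocably, about $\epsilon_i\sum_y m_{t_i}(a,y)/S_{t_i}(a)\approx\epsilon_i\,m(a)/S_{t_i}(a)$ of new mass, and summing over increments gives $m(a)\sum_i \epsilon_i/S_{t_i}(a)$, a harmonic-type sum of order $m(a)\log\bigl(S(a)/S_{t_1}(a)\bigr)$, not $O(m(a))$. This is unavoidable for any scheme that tracks the running ratios, because the ratios can later decrease while allocations cannot. And the difficulty cannot be sidestepped: for the operator $A(\pair{u,v})=u$ your displayed estimate \emph{is} the basic inequality $m(u,v)\,m(u,y)\lea m(u)\,m(u,v,y)$, so completing your plan would amount to reproving the hard direction of symmetry of information from scratch --- the very logarithmic loss you set out to avoid resurfaces in your budget.

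The paper states this theorem without proof (``it is not hard to prove''), but its own toolkit contains the intended shortcut, and it is the route you should take: invoke the basic inequality, item~\ref{p7} of the paper's list ($m(x,z)m(y,z)\lea m(z)m(x,y,z)$, proved in~\cite{suv}), which the paper itself uses in the proof of Theorem~\ref{th1}. Put $z=A(x)$. Since $A(x)$ is computable from $x$, we have $m(x,A(x))\eqa m(x)$ and $m(x,y,A(x))\eqa m(x,y)$, where now the constants are allowed to depend on $A$; hence
$$
m(x)\,m(A(x),y)\eqa m(x,A(x))\,m(y,A(x))\lea m(A(x))\,m(x,y,A(x))\eqa m(A(x))\,m(x,y),
$$
with all constants depending only on $A$ and the reference machine. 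Taking logarithms and using $K=-\log m+O(1)$ gives $\IP(A(x):y)\le\IP(x:y)+c_A$. The delicate semimeasure construction you attempted is thereby delegated to the known (and genuinely nontrivial) proof of the basic inequality, rather than redone by hand.
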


                          \begin{theorem} (\cite{PPI})
                          \label{probnongrowthfin}
Let  $P_x:\Xi\to\Q$, $ x\in\Xi$,  be a uniformly computable family of probability measures\footnote{For technical simplicity,
we assume that all measures in this paper take only rational values.} on $\Xi$ (there is
an algorithm $A$ that computes $P_x(z)$ from $x,z$). 
Then for all  $x,y$ we have 
$$
\ave_{P_x(z)}2^{\IP(\pair{x,z}:y)}\le 2^{\IP(x:y)+c_A}
$$ 
where $\ave_{P_x(z)}$ denotes the average with respect to $P_x(z)$. 
 \end{theorem}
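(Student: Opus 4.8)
The plan is to pass from complexities to the a priori probability and reduce the claim to two domination properties of the universal semimeasure. Write $\m$ for the universal lower semicomputable (discrete) semimeasure on $\Xi$, so that $2^{-\KP(w)}\eqa\m(w)$ by the coding theorem, and let $f\lea g$ mean $f\le c\,g$ for a constant $c$, with $f\eqa g$ meaning $f\lea g\lea f$. Applying the coding theorem three times gives $2^{\IP(u:y)}\eqa\m(\pair{u,y})/(\m(u)\m(y))$ for all $u,y$. Substituting $u=\pair{x,z}$ and $u=x$, and clearing the factors $\m(y)$ and $\m(x)$, the inequality to be proved becomes, up to a multiplicative constant,
$$\tilde F(x,y):=\sum_z\frac{\m(x)\,P_x(z)}{\m(\pair{x,z})}\,\m(\pair{\pair{x,z},y})\lea\m(\pair{x,y}).$$
So it suffices to dominate $\tilde F$ by $\m$ on pairs.

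The one apparent difficulty is the factor $1/\m(\pair{x,z})$: dividing by a lower semicomputable quantity destroys lower semicomputability, so $\tilde F$ is not directly a lower semicomputable semimeasure. I would remove this difficulty by bounding the whole coefficient by a constant. The function $\pair{x,z}\mapsto\m(x)\,P_x(z)$ is lower semicomputable (here the algorithm $A$ computing $P_x$ is used, and the constant below will be $2^{c_A}$) and is a semimeasure on $\Xi$, since $\sum_{x,z}\m(x)P_x(z)=\sum_x\m(x)\sum_zP_x(z)\le\sum_x\m(x)\le 1$. By universality of $\m$ it is therefore dominated by $\m(\pair{x,z})$, i.e.\ $\m(x)P_x(z)\lea\m(\pair{x,z})$, so each coefficient $\m(x)P_x(z)/\m(\pair{x,z})$ is at most a single constant $c$ uniformly in $x,z$. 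Consequently
$$\tilde F(x,y)\le c\sum_z\m(\pair{\pair{x,z},y}).$$

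It remains to bound the right-hand side, and here the division has disappeared. The plan is to observe that $H(\pair{x,y}):=\sum_z\m(\pair{\pair{x,z},y})$ is again a lower semicomputable semimeasure on $\Xi$: it is lower semicomputable because it is a sum of lower semicomputable terms, and since $(x,z,y)\mapsto\pair{\pair{x,z},y}$ is injective we get $\sum_{x,y}H(\pair{x,y})=\sum_{x,z,y}\m(\pair{\pair{x,z},y})\le\sum_{w}\m(w)\le 1$. By universality $H(\pair{x,y})\lea\m(\pair{x,y})$, which is exactly the marginalisation of the auxiliary variable $z$. Combining the two displayed inequalities yields $\tilde F(x,y)\lea\m(\pair{x,y})$, and tracing the constants back through the coding theorem gives the stated bound with multiplicative constant $2^{c_A}$ coming from the description length of $A$.

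I expect the conceptual crux to be the reduction in the first paragraph together with the realisation that the troublesome factor $1/\m(\pair{x,z})$ can be neutralised by a uniform bound on the coefficient, rather than by trying to make $\tilde F$ itself lower semicomputable; once the problem is phrased as dominating $\tilde F$ by $\m$ on pairs, the two semimeasure arguments (the coefficient bound and the marginalisation of $z$) are routine applications of universality.
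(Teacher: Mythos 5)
Your proof is correct; note, however, that the paper itself never proves this theorem. It is stated with a citation to Levin's 1974 paper, and the proof that follows it in the text only establishes the equivalence between the expectation formulation and the reformulation in terms of expectation bounded tests, so your argument fills a gap rather than paralleling an existing one. The closest proof the paper does contain is that of the infinite analogue, Theorem~\ref{probnongrowth2}, and the comparison is instructive: there the two sides of the inequality differ only in the relativized factor $m^{\rho,\omega}(x)$ versus $m^{\rho}(x)$, so the whole statement reduces in one step to showing that $f(x,\rho)=\int m^{\rho,\omega}(x)\,dP_\rho(\omega)$ is a lower semicomputable family of semimeasures and invoking universality of $m^\rho$; no division by a semicomputable quantity ever arises there. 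In your finite-string setting that cancellation is unavailable, because $\KP(\pair{x,z})$ in $\IP(\pair{x,z}:y)$ matches no term of $\IP(x:y)$; this is exactly the source of your factor $1/\m(\pair{x,z})$, and your remedy is different from (though in the same spirit as) the paper's single universality step: you apply universality of $\m$ twice, once for the coefficient bound $\m(x)P_x(z)\lea\m(\pair{x,z})$ (equivalently $\KP(\pair{x,z})\le\KP(x)+\log(1/P_x(z))+c_A$), and once to marginalize $z$ via $\sum_z\m(\pair{\pair{x,z},y})\lea\m(\pair{x,y})$, which is the one-sided half of the paper's item 7. Both semimeasure verifications are sound (injectivity of the pairing is what keeps the total mass at most $1$), and the final constant depends only on $A$ and the choice of universal machine, as required.
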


Actually, this theorem is stated in~\cite{PPI} in terms of tests.

 \begin{definition}
 A $P$-test, where $P$ is a probability distribution on  some space $Z$,
 is a function from $Z$ to $\N$.
 A $P$-test $t(z)$ is \emph{expectation bounded} if  $\ave_{P(z)}2^{t(z)}\le1$.
 A $P$-test is \emph{probability bounded} if  $P$-probability of the 
 set of all outcomes $z\in Z$ with $t(z)\ge n$ is at most $2^{-n}$
 for all $n\in\N$.
 \end{definition}
By Markov inequality, every expectation bounded $P$-test is
probability bounded as well. Using the notion of a test, the conclusion of Theorem~\ref{probnongrowthfin} can be restated as follows:    \\          
{\it For all $x,y$ there is an expectation bounded $P_x$-test $t_{x,y,A}$ 
such that 
$$
\IP(\pair{x,z}:y)\le \IP(x:y)+t_{x,y,A}(z)+c_A.
$$ 
}
\begin{proof} Indeed, in one direction (from Theorem~\ref{probnongrowthfin} to this statement) we 
 let  $$t_{x,y,A}(z)=\IP(\pair{x,z}:y)-\IP(x:y)-c_A.$$
In the other direction we just note that   
$$
\ave_{P_x(z)}2^{\IP(x:y)+t_{x,y,A}(z)+c_A} =  2^{\IP(x:y)+c_A}       \ave_{P_x(z)}2^{t_{x,y,A}(z)}\le   2^{\IP(x:y)+c_A}      \cdot1,
$$
where the inequality holds by the definition of an expectation bounded test.
\end{proof}

\section{Mutual information in infinite binary sequences: the definition}

For infinite binary sequences $\alpha,\beta\in\Omega$, Levin 
defined in~\cite{PPI} the following notion of mutual information $\ia(\alpha:\beta)$ in $\alpha,\beta$:
\begin{definition} \label{def3}
                       $$
\ia(\alpha:\beta)=\log\sum_{x,y\in\Xi}2^{I(x:y)-K^\alpha(x)-K^\beta(y)}.
%=\log\sum_{x,y\in\Xi}\frac {m^\alpha(x)m^\beta(y)m(x,y)}{m(x)m(y)}.
                       $$
%Here  $m^\gamma(x)$ denotes a priori probability on $\Xi$
%relativized by $\gamma$. And $\log$ denotes logarithm base 2.
\end{definition}

Here we use the following notation:

\begin{itemize}
\item  $\Omega$ denotes the set of all infinite binary sequences.
\item $K^\alpha(x)$ denotes the prefix complexity 
relativized by $\alpha\in\Omega$.
\item $\log x$ denotes base 2 logarithm. 
\end{itemize}

\begin{remark}\small
%At first glance the definition looks rather strange.
This definition can be motivated as follows.
%Here is a motivation of this definition.
A natural attempt to define a notion of mutual information of $\alpha,\beta$
would be to let 
$\ia_1(\alpha:\beta)=\sup I(\alpha_{1:n}:\beta_{1:n})$,
where $\alpha_{1:n}$ stands for the length-$n$ prefix of $\alpha$.
This definition is bad, since 
$\alpha_{1:n}$ may have more information than $\alpha$, as it identifies $n$.
By this reason $\ia_1(\alpha:\beta)$ is always infinite, even for
all zero sequences.

To avoid this drawback, we could subtract  $K^\alpha(n)$ or $K^\beta(n)$
(or both) from  $I(\alpha_{1:n}:\beta_{1:n})$, or to consider $I(\alpha_{1:n}:\beta_{1:n}|n)$
instead\footnote{$I(x:y|z)$ is defined as $K(x|z)+K(y|z)-K(\pair{x,y}|z)$}, but there is another criticism.
Why we compare only prefixes of the same length of $\alpha$ and
$\beta$? Assume, for example, that we have stretched $\alpha$ by inserting zeros 
in all odd positions. Then the information in $\alpha$ has not changed,
but after this change  $\alpha_{1:n}$ becomes   $\alpha_{1:2n}$.

To avoid both problems, it is natural to let   
\begin{align*}
\ia_3(\alpha:\beta)&=\sup \{I(\alpha_{1:n}:\beta_{1:m})-K^\alpha(n)-K^\beta(m)\mid n,m\in \N\}\\
&=\sup\{I(x:y)-K^\alpha(x)-K^\beta(y)\mid x \text{ is a prefix of }\alpha\text{ and }y \text{ is a prefix of }\beta\}.
%&=\log \sup\Big\{\frac{m^\alpha(x)m^\beta(y)m(x,y)}{m(x)m(y)}\mid x \text{ is a prefix of }\alpha\text{ and } y \text{ is a prefix of }\beta\Big\}.
\end{align*}
This definition is already very similar to Definition~\ref{def3}:
it is obtained from Definition~\ref{def3}
by replacing summation by supremum  
and reducing the range of $x,y$  to prefixes of $\alpha,\beta$.
%to prefixes of $\alpha,\beta$, respectively.
%we obtain exactly $\ia_3(\alpha:\beta)$. 
%Reducing the range is not important, since the quantity 
%$$
%\ia_4(\alpha:\beta)=\log\sum_{x,y\in\Xi}m^\alpha(m)
%m^\beta(n)2^{I(\alpha_{1:n}:\beta_{1:m})}.
%$$
%  satisfies all good
%properties of $I(\alpha:\beta)$ (this will be seen
%from the proofs below). And we even suspect that $\ia_4(\alpha:\beta)$ might be %equal to 
%$I(\alpha:\beta)$. 
Using summation instead of supremum 
is crucial  for Theorem~\ref{probnongrowth2} below.
(The advantage of summation over maximization is
that the former commutes with another summation while the latter does not.)
Keeping the full range of $x,y$ is important for Theorem~\ref{detnongrowth2} below. (End of remark.)
\end{remark}

It turns out that this definition generalizes Definition~\ref{def1} of mutual information 
for finite strings:
                          \begin{theorem}[\cite{PPI}]\label{th1}
Let $\tilde u$ stand for the infinite sequence
that starts with a the prefix code of $u$ \footnote{say, $0^{n}1u$, where $n$ denotes the length of $u$} and then all zeros. Then
$\ia(\tilde u:\tilde v)=I(u:v)+O(1)$.
                          \end{theorem}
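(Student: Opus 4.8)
The plan is to prove the two inequalities $\ia(\tilde u:\tilde v)\ge I(u:v)-O(1)$ and $\ia(\tilde u:\tilde v)\le I(u:v)+O(1)$ separately. The first is immediate. Since oracle access to $\tilde u$ lets one read off $u$ (its prefix code sits at the front), and conversely $u$ computes every bit of $\tilde u$, relativization by $\tilde u$ agrees with conditioning on the finite string $u$: $K^{\tilde u}(x)=K(x\mid u)+O(1)$, and in particular $K^{\tilde u}(u)=O(1)$ and $K^{\tilde v}(v)=O(1)$. Keeping only the single term $x=u$, $y=v$ in the sum of Definition~\ref{def3} gives $\ia(\tilde u:\tilde v)\ge I(u:v)-K^{\tilde u}(u)-K^{\tilde v}(v)=I(u:v)-O(1)$.

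For the upper bound, the same reduction $K^{\tilde u}(x)=K(x\mid u)+O(1)$ turns the claim into
\[
S:=\sum_{x,y}2^{I(x:y)-K(x\mid u)-K(y\mid v)}\le 2^{I(u:v)+O(1)}.
\]
The heart of the argument is the one–variable inequality
\[
(\star)\qquad \sum_{x}2^{I(x:y)-K(x\mid u)}\le 2^{I(u:y)+O(1)},
\]
holding uniformly in $u$ and $y$. Granting $(\star)$, I would finish by applying it twice. Summing $S$ first over $x$ gives $S\le 2^{O(1)}\sum_y 2^{I(u:y)-K(y\mid v)}$; then, using $I(u:y)=I(y:u)+O(1)$ (the pairing is symmetric up to an $O(1)$ computable swap), a second application of $(\star)$, with the pair $(u,x)$ replaced by $(v,y)$ and the spectator $y$ replaced by $u$, yields $\sum_y 2^{I(y:u)-K(y\mid v)}\le 2^{I(v:u)+O(1)}=2^{I(u:v)+O(1)}$. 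Taking logarithms gives $\ia(\tilde u:\tilde v)\le I(u:v)+O(1)$.

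To establish $(\star)$ I would derive it from the probabilistic conservation inequality, Theorem~\ref{probnongrowthfin}. The obstacle is that that theorem is stated for a uniformly \emph{computable} family of measures, whereas the natural weights $2^{-K(x\mid u)}$ form only the lower–semicomputable a priori semimeasure $\mathbf m(x\mid u)=\sum_{p:\,U^u(p)=x}2^{-|p|}\asymp 2^{-K(x\mid u)}$, where $U^u$ is the reference conditional prefix machine. I would bridge this gap by monotone approximation at the base point $u$: take the computable sub-probability measures $P^{(t)}_u(p)=2^{-|p|}$ on those programs $p$ for which $U^u(p)$ halts within $t$ steps (uniformly computable in $t$, so the constant $c$ from Theorem~\ref{probnongrowthfin} is independent of $t$). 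Theorem~\ref{probnongrowthfin} gives $\sum_p P^{(t)}_u(p)\,2^{I(\pair{u,p}:y)}\le 2^{I(u:y)+c}$. I would then strip the pairing using the deterministic inequality, Theorem~\ref{detnongrowthfin}, applied to the algorithm $\pair{u,p}\mapsto U^u(p)=x$: this gives $I(x:y)\le I(\pair{u,p}:y)+c'$, so the left-hand side is at least $2^{-c'}\sum_p P^{(t)}_u(p)\,2^{I(U^u(p):y)}$. Regrouping by the value $x=U^u(p)$ and letting $t\to\infty$ (monotone convergence, since $\sum_{p:\,U^u(p)=x,\ \le t}2^{-|p|}\uparrow\mathbf m(x\mid u)$) yields $\sum_x \mathbf m(x\mid u)\,2^{I(x:y)}\le 2^{I(u:y)+c+c'}$, which is $(\star)$ after replacing $\mathbf m(x\mid u)$ by $2^{-K(x\mid u)}$ up to the coding-theorem constant.

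I expect the real work to lie in the bookkeeping of constants in this limiting argument: checking that the algorithm computing $P^{(t)}_u$ is uniform in $t$ so that one additive constant serves all $t$, and that the coding-theorem and pairing-symmetry constants do not accumulate when $(\star)$ is invoked twice. I would also take care over the preliminary claim that oracle access to $\tilde u$ is interchangeable with conditioning on the finite string $u$ up to $O(1)$ in both directions. None of these steps is deep, but they are precisely where an $O(1)$ claim can silently degrade, so I would verify them explicitly rather than by appeal to folklore.
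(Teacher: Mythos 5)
Your overall skeleton is sound and is in fact the same as the paper's: the lower bound via the single term $x=u$, $y=v$; the translation $K^{\tilde u}(x)=K(x\mid u)+O(1)$; and the upper bound by applying the one-variable inequality $(\star)$ twice, first summing over $x$ with $y$ as spectator, then over $y$. Your $(\star)$ is precisely the paper's intermediate step, which it writes in a priori probability form as $\sum_x m(x|u)m(u)m(x,y)/m(x)\lea m(y,u)$. Where you genuinely diverge is in how $(\star)$ is proved: the paper gets it in three lines from $m(x|u)m(u)\lea m(x,u)$, the basic inequality $m(x,u)m(x,y)\lea m(x)m(x,y,u)$, and $\sum_x m(x,y,u)\eqa m(y,u)$, whereas you try to derive it from Theorems~\ref{probnongrowthfin} and~\ref{detnongrowthfin} via time-truncated measures and a monotone limit. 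That derivation has a genuine gap at its crux.

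The gap is the assertion that, because the $P^{(t)}_u$ are ``uniformly computable in $t$'', the constant from Theorem~\ref{probnongrowthfin} is independent of $t$. This does not follow from the theorem as stated: the theorem concerns a family indexed by a single string and computed by a single algorithm $A$, and its constant $c_A$ depends on $A$. Your family is doubly indexed, and there are only two ways to feed it to the theorem as a black box. If for each fixed $t$ you apply it to the family $\{P^{(t)}_u\}_u$, the algorithm $A_t$ has $t$ hard-wired, and you have no control over $c_{A_t}$ (on the standard proof it grows roughly like $K(t)$); then $\sup_t$ of your bounds is infinite and monotone convergence yields nothing. If instead you fold $t$ into the index and use the family indexed by $\pair{u,t}$, the conclusion bounds the average by $2^{\IP(\pair{u,t}:y)+c}$ rather than $2^{\IP(u:y)+c}$, and $\sup_t \IP(\pair{u,t}:y)$ is of order $K(y)$ (take $t$ whose binary expansion encodes $y$), so again the limit is useless. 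The uniformity you need is true, but to get it you must open up the proof of Theorem~\ref{probnongrowthfin} (which this paper only cites) and observe that its constant is governed by a domination constant that is uniform over $t$: namely that $g(\pair{u,p})=[U^u(p)\text{ halts}]\,2^{-|p|}m(u)$ is a lower semicomputable semimeasure on pairs, hence $g(\pair{u,p})\lea m(\pair{u,p})$. In other words, you need Theorem~\ref{probnongrowthfin} strengthened to uniformly lower semicomputable families of semimeasures --- at which point the truncation-and-limit device is superfluous, since you can apply that strengthened statement directly to $P_u(p)=[U^u(p)\downarrow]\,2^{-|p|}$. Note that this missing domination is exactly the paper's inequality $m(x|u)m(u)\lea m(x,u)$, i.e.\ $K(x,u)\le K(u)+K(x|u)+O(1)$, so the honest repair of your route essentially collapses into the paper's direct argument. (Two further small points, both fixable: Theorem~\ref{probnongrowthfin} is stated for probability measures, while your $P^{(t)}_u$ are sub-probability measures, so you must top them up with a dummy outcome; and your appeal to Theorem~\ref{detnongrowthfin} for the map $\pair{u,p}\mapsto U^u(p)$ is fine since that map is partial computable and defined on the support of $P^{(t)}_u$.)
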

   The notation $f(z)\le g(z)+O(1)$ means that there is a constant $c$ (depending on the choice 
of universal machines in the definition of the prefix complexity and conditional prefix complexity) such that $f(z)\le g(z)+c$ for all $z$.                       
The notation $f(z)= g(z)+O(1)$ means that         $f(z)\le g(z)+O(1)$    and $g(z)\le f(z)+O(1)$  simultaneously.            
                          
There is no  proof of this theorem in \cite{PPI}. However,
several its proofs appeared in other papers (e.g. \cite{Geiger,bbgs}). For the sake of completeness we present here yet another proof.
In the proof we use the following notions, notations  and results:
\begin{enumerate}
\item A function $f:\Xi\to\R_{\ge0}$ is called
\emph{lower semicomputable} if there is an enumerable 
set $S$ of pairs of the form (a positive rational number $r$, a binary word $x$) such that
$f(x)=\sup\{r\mid (r,x)\in S\}$ (where supremum of the empty set is defined as 0).
The notion of a lower semicomputable
 function $f:\Xi\times\Xi\to\R_{\ge0}$ is defined in a similar way:
 this time $S$ consists of triples and not pairs.
 Similarly, we define the notion of a lower semicomputable
 function $f:\Xi\times\Omega\to\R_{\ge0}$:
there is a family of  sets $S^\alpha$, $\alpha\in\Omega$, such that 
there is an oracle Turing machine that enumerates  $S^\alpha$ with oracle $\alpha$  (the same machine for all oracles $\alpha$)
and $f(x,\alpha)=\sup\{r\mid (r,x)\in S^\alpha\}$.
\item A function $f:\Xi\to\R_{\ge0}$ is called a \emph{semimeasure}
if $\sum_{x\in\Xi} f(x)\le 1$. 
A function $f:\Xi\times\Xi\to\R_{\ge0}$ [$f:\Xi\times\Omega\to\R_{\ge0}$] is called \emph{a family of semimeasures}
if $\sum_{x\in\Xi} f(x,y)\le 1$ for all $y$ [$\sum_{x\in\Xi} f(x,\omega)\le 1$ for all $\omega\in\Omega$] . 
%\item $\m(x)$ and $m(x|y)$ denote $2^{-K(x)}$ and $2^{-K(x|y)}$, respectively, 
%and are called  the \emph{a priori probability and conditional a priori probability} on $\Xi$. 
\item Let $\m(x)$ and $m(x|y)$ denote $2^{-K(x)}$ and $2^{-K(x|y)}$, respectively.  By G\'acs--Levin theorem 
$m(x)$ [$m(x|y)$] is a largest up to a constant factor lower semicomputable semimeasure
[family of semimeasures]: for every   lower semicomputable semimeasure $f(x)$ [family of semimeasures $f(x,y)$]
there is $c_f$ such that $f(x)\le c_fm(x)$ for all $x$ [$f(x,y)\le c_f m(x|y)$ for all $x,y$].  The functions $\m(x)$ and $m(x|y)$ are called  the \emph{a priori probability} and the \emph{conditional a priori probability} on $\Xi$.
\item  Let $m^\alpha(x)$ denote 
$2^{-K^\alpha(x)}$. 
Likewise $m^\alpha(x)$ is a largest up to a constant factor lower semicomputable family of
semimeasures: 
for every  lower semicomputable family of semimeasures $f:\Xi\times\Omega\to \R_{\ge0}$ 
there is $c_f$ such that $f(x,\alpha)\le c_fm^\alpha(x)$ for all $x,\alpha$. 
\item $f(z)\lea g(z)$ means that there is a constant $c$ (depending on the choice 
of universal machines in the definition  of the prefix complexity and conditional prefix complexity) such that $f(z)\le c\cdot g(z)$ for all $z$.
 \item $f(z)\eqa g(z)$ means that $f(z)\lea g(z)$ and $g(z)\lea f(z)$.
\item $\sum_{x\in\Xi} m(x,y)\eqa m(y)$~\cite[Problem~101 on p. 97]{suv}) \label{p6}.
\item $m(x,z)m(y,z)\lea  m(z)m(x,y,z)$,   the ``basic inequality'' (for the proof see~\cite[Theorem 70 on p. 110]{suv}).\label{p7}
\end{enumerate}

\begin{proof}
(a) Using a priori probability, the definition 
of mutual information can be rewritten as follows:
       $$
\ia(\alpha:\beta)%=\log\sum_{x,y\in\Xi}2^{I(x:y)-K^\alpha(x)-K^\beta(y)}.
=\log\sum_{x,y\in\Xi}\frac {m^\alpha(x)m^\beta(y)m(x,y)}{m(x)m(y)}.
                       $$
We have $m^{\tilde u}(x)\eqa m(x|u)$.
 Thus in one direction we have to show that 
$$
\sum_{x,y\in\Xi}\frac {m(x|u)
m(y|v)m(x,y)}{m(x)m(y)}\lea 
\frac {m(u,v)}{m(u)m(v)}.
$$
%Here $\lea$ means ``less than up to a multiplicative constant'', that constant
%depends on the choice of the maximal lower semicomputable function $m$.
After moving $m(u)m(v)$ from the right hand side to the left hand side
this inequality becomes
$$
\sum_{x,y\in\Xi}\frac {m(x|u)m(u)
m(y|v)m(v)m(x,y)}{m(x)m(y)}\lea 
m(u,v)
$$
First notice that $m(x|u)m(u)\lea  m(x,u)$ (in the logarithmic form: $K(x,u)\le K(u)+K(x|u)+O(1)$, see, e.g.,~\cite{lv,suv}). 
Then use  the basic inequality  $m(x,u)m(x,y)\lea  m(x)m(x,y,u)$. 
Thus the left hand side of the sought inequality is at most
$$
\sum_{x,y\in\Xi}\frac {m(x,y,u)
m(y|v)m(v)}{m(y)}.
$$
Now the only term that includes $x$ is $m(x,y,u)$ and
$$ 
\sum_{x\in\Xi}m(x,y,u)\eqa m(y,u)
$$
by item~\ref{p6} above.
Thus we get
$$
\sum_{y\in\Xi}\frac {m(y,u)
m(y|v)m(v)}{m(y)}.
$$
In a similar way we can show that 
$$
\frac {m(y,u)
m(y|v)m(v)}{m(y)}\lea  \frac {m(y,u)
m(y,v)}{m(y)}\lea  m(y,u,v).
$$
And again  by item~\ref{p6} we have 
$$
\sum_{y\in\Xi}m(y,u,v)\eqa m(u,v).
$$

(b) The other direction is trivial, since for $\alpha=\tilde u$ and $\beta=\tilde v$ the sum in Definition~\ref{def3}
has the term
$$
2^{I(u:v)-K(u|u)-K(v|v)+O(1)}=2^{I(u:v)+O(1)}.\qed
$$
\renewcommand{\qed}{}
\end{proof}

An interesting case is when one sequence is infinite and the other one is finite.
In this case it is usual to define the notion of mutual information by the formula $I(u:\beta)= K(u)-K^\beta(u)$,
where $ u\in\Xi$ and $\beta\in\Omega$.
It turns out that this definition differs from Levin's one:
    \begin{theorem}\label{th4}
    Assume that $ u\in\Xi$ and $\beta\in\Omega$. Then the following are true:
(a) $\ia(\tilde u:\beta)\ge K(u)-K^\beta(u)-O(1)$. (b) The converse inequality is false in general case. (c)  The converse inequality holds, 
if $0'$ (the Halting problem) Turing reduces to $\beta$:
 $\ia(\tilde u:\beta)\le K(u)-K^\beta(u)+O(1)$. (d) In any case $\ia(\tilde u:\beta)\le K(u)+O(1)$.

                          \end{theorem}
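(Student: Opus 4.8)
The plan is to rewrite the sum defining $2^{\ia(\tilde u:\beta)}$ in terms of a priori probabilities exactly as in the proof of Theorem~\ref{th1}, using $m^{\tilde u}(x)\eqa m(x|u)$, so that
$$
2^{\ia(\tilde u:\beta)}\eqa\sum_{x,y\in\Xi}\frac{m(x|u)\,m^\beta(y)\,m(x,y)}{m(x)\,m(y)}.
$$
All four parts then become statements about this single sum, and the same three tools as before --- the inequality $m(x|u)m(u)\lea m(x,u)$, the basic inequality (item~\ref{p7}), and the marginalization $\sum_x m(x,\dots)\eqa m(\dots)$ (item~\ref{p6}) --- will do most of the work.

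For (a) I would simply retain the single term $x=y=u$. Since $m(u|u)\eqa1$ and $m(u,u)\eqa m(u)$, this term equals $m^\beta(u)/m(u)=2^{K(u)-K^\beta(u)}$ up to a constant factor, and the whole sum is at least this one term, giving $\ia(\tilde u:\beta)\ge K(u)-K^\beta(u)-O(1)$. For (d) I would bound the inner sum over $x$ uniformly in $y$: writing $t(y)=\sum_x m(x|u)m(x,y)/(m(x)m(y))$, the two inequalities above together with $\sum_x m(x,y,u)\eqa m(y,u)$ and $m(y,u)\lea m(y)$ give $t(y)\lea 1/m(u)$. Summing against $m^\beta(y)$ and using that $m^\beta$ is a semimeasure, $\sum_y m^\beta(y)\le1$, yields $2^{\ia(\tilde u:\beta)}\lea 1/m(u)$, i.e. $\ia(\tilde u:\beta)\le K(u)+O(1)$. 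Both constants here are absolute.

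For (c) the idea is to apply the (relativized) G\'acs--Levin maximality of $m^\beta$. Set
$$
\Psi(u)=m(u)\sum_{x,y\in\Xi}\frac{m(x|u)\,m^\beta(y)\,m(x,y)}{m(x)\,m(y)}\eqa m(u)\,2^{\ia(\tilde u:\beta)},
$$
so that the claim $\ia(\tilde u:\beta)\le K(u)-K^\beta(u)+O(1)$ is exactly $\Psi(u)\lea m^\beta(u)$. First I would check that $\Psi$ is a semimeasure in $u$ up to a constant factor: carrying out the same reduction as above but now also summing over $u$, one obtains $\sum_u\Psi(u)\lea\sum_y m^\beta(y)\le1$, a bound that needs no oracle. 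The role of the hypothesis $0'\le_T\beta$ is that relative to $\beta$ the numbers $1/m(x)=2^{K(x)}$ and $1/m(y)=2^{K(y)}$ become computable (the Halting problem computes the prefix complexity exactly); hence $\Psi$, being a sum of products of functions lower semicomputable relative to $\beta$ (namely $m(u),m(x|u),m(x,y),m^\beta(y)$) with functions computable relative to $\beta$, is itself lower semicomputable relative to $\beta$. By maximality of $m^\beta$ among lower semicomputable semimeasures relative to $\beta$ we then get $\Psi(u)\lea m^\beta(u)$, with a constant that may depend on $\beta$ (through the reduction $0'\le_T\beta$), which is exactly what (c) asks for.

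Part (b) is where the real work lies. Here I must exhibit $\beta$ with $0'\not\le_T\beta$ together with strings $u$ of growing length for which $\ia(\tilde u:\beta)-(K(u)-K^\beta(u))\to\infty$. By (d) this forces $\ia(\tilde u:\beta)$ to stay close to its maximal value $K(u)$ while $\beta$ carries almost no single-shot information about $u$, i.e. $K^\beta(u)\approx K(u)$. The mechanism must be precisely the one that the proof of (c) exploits: without $0'$ the function $\Psi$ ceases to be lower semicomputable relative to $\beta$, so the domination $\Psi\lea m^\beta$ can break down. Concretely I would search for a $\beta$ that makes $m^\beta(y)$ large on a spread-out family of strings $y$, each carrying a little information about $u$ (so that $m(u,y)/m(y)$ is not too small and $\sum_y m^\beta(y)m(u,y)/m(y)$ is of order one), while reassembling $u$ from these pieces provably requires solving the Halting problem, keeping $K^\beta(u)$ large. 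Verifying that such a $\beta$ exists and does not compute $0'$ --- presumably by a direct construction, or by a measure-theoretic argument over random or sufficiently generic $\beta$ --- is the main obstacle; the estimates in (a), (c) and (d) are, by contrast, routine once the sum is put into the a priori-probability form.
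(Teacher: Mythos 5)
Your treatments of (a), (c) and (d) are correct and follow essentially the paper's own route: (a) is the single term $x=y=u$ of the sum, (d) follows from $m(y,u)\lea m(y)$ after the standard reduction, and (c) is the relativized G\'acs--Levin maximality argument applied to the semimeasure $f(u)=\sum_{y}m(y,u)m^\beta(y)/m(y)$, which is lower semicomputable relative to $\beta$ because $0'$ computes $K$ exactly. (The paper additionally packages the $\beta$-dependence into a uniform family $g(u,\gamma)$ of semimeasures; this is the same standard step you invoke as ``maximality relative to $\beta$''.)

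The genuine gap is part (b): you do not prove it. You describe what a counterexample ought to look like and concede that constructing such a $\beta$ ``is the main obstacle''; a plan whose hard step is left open is not a proof. Moreover, your plan aims at a needlessly difficult target in a misleading direction. You search for an exotic, presumably non-computable $\beta$ that spreads information about $u$ over many strings $y$, and you even strengthen the goal to a \emph{fixed} such $\beta$ with unbounded gap. In fact an (essentially) computable $\beta$ already works, and the failure comes from a different phenomenon: the non-uniformity of the constant hidden in the converse inequality, namely the unboundedness of $K(K(v)|v)$. The paper takes $\beta=\tilde v$ and $u$ equal to (the binary encoding of) $K(v)$, where $v\in\Xi$. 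Then by Theorem~\ref{th1},
$\ia(\tilde u:\beta)=I(K(v):v)+O(1)=K(K(v))+K(v)-K(v,K(v))+O(1)=K(K(v))+O(1)$,
the last step by G\'acs' theorem $K(v,K(v))=K(v)+O(1)$, while
$K(u)-K^\beta(u)=K(K(v))-K(K(v)|v)+O(1)$.
Hence the converse inequality would force $K(K(v)|v)=O(1)$ uniformly in $v$, which is false: by \cite{gacs,bs}, $K(K(v)|v)$ can be of order $\log n$ for strings $v$ of length $n$. Note that this $\beta$ is an eventually-zero sequence, so certainly $0'$ does not Turing reduce to it, consistently with (c); no randomness, genericity, or diagonalization against $0'$ is needed.
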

\begin{proof} (a)
In Definition~\ref{def3}, let $\alpha=\tilde u$ and consider the term in the sum where $x=y=u$.

(b) Let $\beta=\tilde v$ and $u=K(v)$, where $v\in\Xi$. Then the left hand side  of the inequality evaluates to
$$
I(K(v):v)+O(1)=K(K(v))+K(v)-K(v,K(v))+O(1)=K(K(v))+O(1)
$$ 
(the last equality holds by G\'acs theorem~\cite{gacs}).
And its right hand side  equals 
$K(K(v))-K(K(v)|v)+O(1)$. And by~\cite{gacs,bs} we know that 
$K(K(v)|v)$ can be $\Omega(\log n)$ for strings $v$ of length $n$.  

(c) As in Theorem~\ref{th1}(a), we can show
that
$$
2^{I(\tilde u:\beta)-K(u)}\lea
\sum_{y\in\Xi}\frac {m(y,u)m^\beta(y)}{m(y)}.
$$
The function $f(u)=\sum_{y\in\Xi}\frac {m(y,u)m^\beta(y)}{m(y)}$ from the
right hand side of this inequality is lower semicomputable
relative to $0'$ and hence relative to $\beta$. 
And $f(u)$ is a semimeasure (up to a constant factor), since  
$$
\sum_{u\in\Xi}\sum_{y\in\Xi}\frac {m(y,u)m^\beta(y)}{m(y)}=
\sum_{y\in\Xi}\sum_{u\in\Xi}\frac {m(y,u)m^\beta(y)}{m(y)}\eqa
\sum_{y\in\Xi}\frac {m(y)m^\beta(y)}{m(y)}\le1
$$
(in the equality we have changed the order of summation). 
Hence for some $c$ depending on $\beta$ we have 
$$
\sum_{y\in\Xi}\frac {m(y,u)m^\beta(y)}{m(y)}\le c\cdot m^\beta(u).
$$
Moreover, in a standard way we can construct a
lower  semicomputable family of semimeasures $g(u,\gamma)$ such that
$g(u,\beta)=f(\beta)$. Since 
$g(u,\gamma)\le c\cdot m^\gamma(u)$ for a constant $c$ that does not depend on $\beta$,
we have 
$2^{I(\tilde u:\beta)-K(u)}\lea 2^{-K^\beta(u)}$ and we are done.

(d) In item (c) we have seen that 
$$
2^{I(\tilde u:\beta)-K(u)}\lea
\sum_{y\in\Xi}\frac {m(y,u)m^\beta(y)}{m(y)}.
$$
Since $m(y,u)\lea m(y)$ the right hand side of this inequality is bounded by a constant.
\end{proof}
 
 An interesting problem is to find a simpler expression for $I(\tilde u:\beta)$
( where $ u\in\Xi$ and $\beta\in\Omega$) than the expression in Definition~\ref{def3}.
 
\section{Information Conservation Inequalities for infinite sequences}

The quantity $I(\alpha:\beta)$ satisfies \emph{Information Conservation Inequalities},
Theorems~\ref{detnongrowth2} and~\ref{probnongrowth2} below,
which are similar to  Theorems~\ref{detnongrowthfin} and~\ref{probnongrowthfin}. 
Theorems~\ref{detnongrowth2} and~\ref{probnongrowth2} were  published in \cite{PPI} without proofs.

                          \begin{theorem}[\cite{PPI}]
                         \label{detnongrowth2}
Assume that  $A$ is an algorithmic operator.
Then  $$\ia(A(\alpha):\beta)\le \ia(\alpha:\beta)+K(A)+O(1)$$
for all $\alpha,\beta$ such that $A(\alpha)$ is infinite.
                          \end{theorem}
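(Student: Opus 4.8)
The plan is to argue, exactly as in the proof of Theorem~\ref{th1}, through the a priori probability form of the definition, so that everything reduces to a single inequality comparing $m^{A(\alpha)}$ with $m^\alpha$. Writing
$$
\ia(A(\alpha):\beta)=\log\sum_{x,y\in\Xi}\frac{m^{A(\alpha)}(x)\,m^\beta(y)\,m(x,y)}{m(x)m(y)},
$$
I see that the only place where $A$ enters is through the factor $m^{A(\alpha)}(x)$. Hence, if I can establish the key inequality
$$
m^{A(\alpha)}(x)\lea 2^{K(A)}\,m^\alpha(x)
$$
with a constant in $\lea$ that does not depend on $A$, on $\alpha$, or on $x$, then substituting it into the displayed sum multiplies the whole expression by $2^{K(A)}$ up to an $O(1)$ factor, and taking logarithms yields $\ia(A(\alpha):\beta)\le \ia(\alpha:\beta)+K(A)+O(1)$, which is the assertion. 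In logarithmic form the key inequality reads $K^\alpha(x)\le K^{A(\alpha)}(x)+K(A)+O(1)$, and it is intuitively clear: given the oracle $\alpha$, one first runs a shortest program for $A$ to produce the relevant prefix of $A(\alpha)$, and then decodes $x$ from its shortest $A(\alpha)$-description.

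The delicate point is that the constant must be \emph{uniform} in $A$. The naive reading, ``relativize the decoder by a fixed $A$'', would give only a constant $c_A$ depending on $A$, as in Theorem~\ref{detnongrowthfin}; here the dependence on $A$ must instead be paid for explicitly by the term $K(A)$. To achieve this I would use the standard mixing trick together with the maximality of $m^\alpha$ among lower semicomputable families of semimeasures. Define
$$
g(x,\alpha)=\sum_{A}m(A)\,m^{A(\alpha)}(x),
$$
where the sum ranges over all programs $A$ for algorithmic operators and $m(A)=2^{-K(A)}$. For each fixed $\alpha$ this is a semimeasure in $x$, since $\sum_x g(x,\alpha)=\sum_A m(A)\sum_x m^{A(\alpha)}(x)\le\sum_A m(A)\le1$, using that each $m^{A(\alpha)}$ is a semimeasure and that $m$ is a semimeasure on programs. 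Moreover $g$ is lower semicomputable relative to $\alpha$, and uniformly so: with oracle $\alpha$ one enumerates lower bounds for $m(A)$, runs $A$ on $\alpha$ to output successive bits of $A(\alpha)$, and feeds this (possibly only partially produced) sequence as an oracle into the enumerator of the relativized a priori probability. Thus $g$ is a lower semicomputable family of semimeasures, so by its maximality there is a single constant $c$ with $g(x,\alpha)\le c\,m^\alpha(x)$ for all $x,\alpha$; since $g(x,\alpha)\ge m(A)\,m^{A(\alpha)}(x)$ for every individual $A$, this gives $m^{A(\alpha)}(x)\le c\,2^{K(A)}m^\alpha(x)$, which is precisely the key inequality.

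The main obstacle is exactly this uniformity step, together with the bookkeeping needed to make $g$ a legitimate lower semicomputable family of semimeasures over \emph{all} operators at once. One must check that feeding a finite, partially produced prefix of $A(\alpha)$ into the oracle enumerator still yields a sub-enumeration bounded by a semimeasure — this holds because the halting programs of the oracle prefix machine remain prefix-free, so the Kraft bound survives even when $A$ fails to be total and $A(\alpha)$ is only finite. Once $g$ is in place, the maximality of $m^\alpha$ does all the work, and the remainder of the argument is the routine substitution into the a priori probability form described in the first paragraph.
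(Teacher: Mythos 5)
Your proof is correct, and its outer skeleton --- rewriting $\ia$ in the a priori probability form and reducing the whole theorem to the single inequality $K^\alpha(x)\le K^{A(\alpha)}(x)+K(A)+O(1)$ with a constant uniform in $A$ --- is exactly the reduction the paper makes. Where you genuinely depart from the paper is in how this key inequality is proved. The paper proves it by a direct coding argument: relative to $\alpha$, the string $x$ is described by the shortest (self-delimiting) program for $A$ followed by a shortest program for $x$ relative to $A(\alpha)$; prefix-freeness makes this concatenation parseable, and the decoder simulates the universal prefix oracle machine, answering a query for the $i$th oracle bit by running $A$ on $\alpha$ until that bit appears, waiting forever if $A(\alpha)$ is too short. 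You instead invoke the G\'acs--Levin maximality of $m^\alpha$ applied to the mixture $g(x,\alpha)=\sum_A m(A)\,m^{A(\alpha)}(x)$. Both are sound, and they buy different things. The paper's construction is elementary and self-contained: uniformity in $A$ is visible directly, since the only cost of switching decoders is the $K(A)$ bits spent on the program for $A$. Your mixture argument outsources uniformity to the universality theorem (item 4 of the paper's toolkit), at the price of the bookkeeping you correctly flag: $g$ must be defined via a stalled simulation so that it remains a lower semicomputable family of semimeasures even for operators with finite output. Your Kraft-bound justification of that step is right; an equivalent way to see it is that anything enumerated from a finite oracle prefix would also be enumerated from any infinite extension of it, so each stalled sum is dominated by a genuine $m^\gamma$ and hence by $1$. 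It is worth noticing that both proofs use the same device --- stalling / ``waiting forever'' --- to neutralize operators whose output is finite, which is precisely where the hypothesis that $A(\alpha)$ is infinite enters.
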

By algorithmic operator here we mean an oracle Turing machine that has a write-only one-way output tape with binary tape alphabet.
By $K(A)$ we denote the prefix complexity of the program of that machine.
Such a machine with an oracle $\alpha$
prints a finite or infinite binary sequence on its output tape. We denote that sequence by  $A(\alpha)$. 
If  $A(\alpha)$ is finite, then  $A(\alpha)$ may have more information than $\alpha$ and   
$A$ together and the statement of the theorem may be false in that case.
For example, let $\alpha$ consist of a prefix code of (the binary expansion of) a natural number $n$
followed by zeroes  and let  $A$ extract $n$ from $\alpha$, print the prefix code of $n$ and then
run all programs of length at most $n$
dovetail style and append to the output the prefix code of $x$ whenever one of those programs halts with a result $x$.
Then  $A(\alpha)$ contains $n$ and a list of all strings of
complexity  at most $n$.
Hence $K(A(\alpha))\ge n-O(1)$, while $K(A)=O(1)$ and $\alpha$ has $O(\log n)$ bits of
information about any sequence (see Theorem~\ref{th7}(a) below).

\begin{proof}                                           
The inequality follows immediately from the inequality
$$
K^\alpha(x)\le K^{A(\alpha)}(x)+K(A)+O(1),
$$ 
which holds for all $\alpha,x,A$ such that 
$A(\alpha)$ is infinite (otherwise $ K^{A(\alpha)}(x)$ is undefined).
To prove this inequality, note
that, relative to $\alpha$, the string $x$ can be identified by the shortest program for $A$ 
followed by the shortest  program $p$ of $x$ relative to $A(\alpha)$.
Given such description, we first find the program of $A$
and then we run the (fixed) oracle Turing machine that,
given $A(\alpha)$ as oracle and $p$ as input prints $x$. 
When that machine queries $i$th bit of its oracle, we run the oracle Turing machine $A$ that prints $A(\alpha)$
until  its $i$th bit appears. We then use that bit as the oracle's answer to the query. If  the length of $A(\alpha)$ is less than $i$, then 
we wait  forever and do not print any result. 
\end{proof}
  
By this theorem, Definition~\ref{def3} is encoding invariant. This means that 
for any sequences $\alpha,\beta\in\Omega$
that can be obtained from each other by algorithmic operators we have 
$I(\alpha:\gamma)-I(\beta:\gamma)=O(1)$ for every $\gamma\in\Omega$.
 More specifically,  $$
 |I(\alpha:\gamma)-I(\beta:\gamma)|\le\max\{K(A),K(B)\}+O(1),
 $$
 if $A(\alpha)=\beta$ and $B(\beta)=\alpha$. In particular,
 in the next theorem we will need to encode 
 pairs of sequences $\rho,\omega\in\Omega$ by one sequence denoted by  $\pair{\rho,\omega}$.
By the encoding invariance, the value 
 $I(\pair{\rho,\omega}:\alpha)$ changes by at most $O(1)$, if we switch to another encoding.
 For instance, we can let $\pair{\rho,\omega}$ be equal to
the sequence whose odd bits are those of $\rho$ and even bits are those of $\omega$.
  
 The second ``probabilistic'' Information Conservation inequality is the following):
  
                          \begin{theorem}[we cite \cite{PPI}]
                            %\label{probnongrowth2}
                            Let 
$P$ be an arbitrary probability distribution on $\Omega$, and let  $\rho\in\Omega$ be a sequence with
respect to which $P$ is computable
(such a sequence always exists). Then for all $\alpha\in\Omega$
there is a probability bounded $P$-test $t_{\alpha}$ such 
that 
$$
\IP(\pair{\rho,\omega}:\alpha)\le \IP(\rho:\alpha)+t_{\alpha}(\omega)+O(1)
$$ 
for all $\omega\in\Omega$. 
                          \end{theorem}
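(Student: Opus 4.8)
The plan is to mimic the proof of Theorem~\ref{th1}, reducing the statement to a single inequality between lower semicomputable semimeasures and then extracting the test by Markov's inequality. Writing Definition~\ref{def3} through the a priori probability, put
$$S(\gamma:\alpha)=\sum_{x,y\in\Xi}\frac{m^\gamma(x)\,m^\alpha(y)\,m(x,y)}{m(x)\,m(y)},\qquad I(\gamma:\alpha)=\log S(\gamma:\alpha).$$
Because the factor $m^\alpha(y)m(x,y)/(m(x)m(y))$ does not involve the first argument, everything will hinge on the averaged a priori probability
$$h^\rho(x)=\int_\Omega m^{\pair{\rho,\omega}}(x)\,dP(\omega).$$
The lemma I would establish is $h^\rho(x)\lea m^\rho(x)$; granting it, Tonelli's theorem lets me move the $P$-average inside the double sum, giving
$$\int_\Omega S(\pair{\rho,\omega}:\alpha)\,dP(\omega)=\sum_{x,y}\frac{m^\alpha(y)\,m(x,y)}{m(x)\,m(y)}\,h^\rho(x)\ \lea\ S(\rho:\alpha),$$
with a constant $C$ that does not depend on $\alpha$.

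From this averaged bound the test is immediate, exactly as the remark after Theorem~\ref{probnongrowthfin} suggests. Dividing by $S(\rho:\alpha)$ --- the case $S(\rho:\alpha)=\infty$ making the theorem trivial --- yields
$$\int_\Omega 2^{\,I(\pair{\rho,\omega}:\alpha)-I(\rho:\alpha)}\,dP(\omega)\le C.$$
I would then define the $\N$-valued function
$$t_\alpha(\omega)=\max\bigl\{0,\ \bigl\lfloor I(\pair{\rho,\omega}:\alpha)-I(\rho:\alpha)-\log C\bigr\rfloor\bigr\}.$$
The inequality of the theorem follows at once from the floor, with $O(1)=\log C+1$, while Markov's inequality applied to the averaged bound gives $P\{t_\alpha\ge n\}\le 2^{-n}$, so $t_\alpha$ is probability bounded. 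Since the paper's tests are arbitrary functions, no effectivity of $t_\alpha$ is needed.

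The main obstacle is the lemma $h^\rho(x)\lea m^\rho(x)$. By the maximality of $m^\rho$ among lower semicomputable families of semimeasures (item~4 above) it suffices to check that $h^\rho$ is such a family. That it is a semimeasure in $x$ for each $\rho$ is Tonelli again: $\sum_x h^\rho(x)\le\int_\Omega 1\,dP=1$. The delicate point is lower semicomputability relative to $\rho$, since the oracle $\pair{\rho,\omega}$ mixes the available $\rho$ with the integration variable $\omega$ while $P$ is only $\rho$-computable. I would resolve this by representing $m^\gamma(x)=\sup_n\psi(x,\gamma_{1:n})$ with $\psi$ a lower semicomputable function on finite strings, monotone in its prefix argument; such a representation always exists because any lower bound enumerated relative to an oracle uses only finitely many oracle bits. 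For $\gamma=\pair{\rho,\omega}$ the $n$-th term then depends on $\rho$ and $\omega$ only through $\rho_{1:n},\omega_{1:n}$, so monotone convergence (the $n$-th integrand being constant on length-$n$ cylinders) turns the average into
$$h^\rho(x)=\sup_n\int_\Omega\psi\bigl(x,\pair{\rho_{1:n},\omega_{1:n}}\bigr)\,dP(\omega)=\sup_n\sum_{w\in\{0,1\}^n}\psi\bigl(x,\pair{\rho_{1:n},w}\bigr)\,P([w]),$$
where $[w]$ is the cylinder of sequences beginning with $w$. Each inner sum is finite, $\psi$ is lower semicomputable, and the cylinder probabilities $P([w])$ are rational and computable from $w$ with oracle $\rho$; hence $h^\rho$ is a supremum of a $\rho$-computable family of lower semicomputable functions and so is lower semicomputable relative to $\rho$. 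The only genuinely new ingredient beyond Theorem~\ref{th1} is this passage from the continuous $P$-average to a $\rho$-effective finite sum over cylinders.
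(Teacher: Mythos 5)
Your proof is correct and follows essentially the same route as the paper's: rewrite $I$ via the a priori probability, reduce everything to the single bound $\int m^{\pair{\rho,\omega}}(x)\,dP(\omega)\lea m^{\rho}(x)$, and establish that bound by showing the left-hand side is a semimeasure lower semicomputable relative to $\rho$ (Tonelli for the semimeasure property, maximality of $m^{\rho}$ for the constant). The only differences are ones of detail: you spell out the cylinder-sum argument for lower semicomputability and the Markov-inequality extraction of the probability bounded test, both of which the paper dispatches in a sentence.
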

This statement needs a clarification what the constant $O(1)$ depends on, because it must depend on $P$.
Here is such a clarification:
\addtocounter{theorem}{-1}

                          \begin{theorem}[a clarification]% (\cite{PPI})
\label{probnongrowth2}
Assume that a family $P_\rho$, $\rho\in\Omega$, of probability distributions on $\Omega$ is fixed.
Assume that there is a Turing machine $T$ that for all $\rho$
computes  $P_\rho$ having oracle  access to $\rho$ \footnote{This means that 
the machine computes the probability of the cylinder $\Omega_x=\{\alpha\in\Omega\mid 
x \text{ is a prefix of }\alpha\}$ for any given  $x\in\Xi$.}. Then for all $\alpha,\rho\in\Omega$
there is a probability (and even expectation) bounded $P_\rho$-test $t_{\alpha,\rho,T}$ such 
that 
$$
\IP(\pair{\rho,\omega}:\alpha)\le \IP(\rho:\alpha)+t_{\alpha,\rho,T}(\omega)+c_{T}
$$ 
for all $\omega\in\Omega$, where $c_T$ does not depend on $\rho,\alpha,\omega$. 
                          \end{theorem}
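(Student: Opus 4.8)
The plan is to reduce the statement to a single averaged inequality, after which the test is produced exactly as in the finite case. Following the reformulation given right after Theorem~\ref{probnongrowthfin}, it suffices to prove
$$
\ave_{P_\rho(\omega)}2^{\IP(\pair{\rho,\omega}:\alpha)}\le 2^{\IP(\rho:\alpha)+c_T},
$$
where $c_T$ depends only on $T$ and the fixed universal machines. Indeed, granting this, one sets $t_{\alpha,\rho,T}(\omega)=\IP(\pair{\rho,\omega}:\alpha)-\IP(\rho:\alpha)-c_T$; the displayed inequality of the theorem then holds by definition, the averaged inequality says exactly that $\ave_{P_\rho(\omega)}2^{t_{\alpha,\rho,T}(\omega)}\le1$, i.e. that $t_{\alpha,\rho,T}$ is expectation bounded, and by the Markov inequality it is probability bounded as well.

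First I would rewrite everything through the a priori probability, as in the proof of Theorem~\ref{th1}(a). Writing $\sigma=\pair{\rho,\omega}$, we have $2^{\IP(\sigma:\alpha)}=\sum_{x,y}\frac{m^\sigma(x)m^\alpha(y)m(x,y)}{m(x)m(y)}$. All summands are nonnegative, so by Tonelli I may integrate term by term and obtain
$$
\ave_{P_\rho(\omega)}2^{\IP(\pair{\rho,\omega}:\alpha)}=\sum_{x,y}\frac{m^\alpha(y)m(x,y)}{m(x)m(y)}\,F(x,\rho),\qquad F(x,\rho):=\ave_{P_\rho(\omega)}m^{\pair{\rho,\omega}}(x).
$$
Thus the whole theorem reduces to the single bound $F(x,\rho)\lea m^\rho(x)$ with a constant $c$ independent of $x$ and $\rho$: substituting it back gives $\ave_{P_\rho(\omega)}2^{\IP(\pair{\rho,\omega}:\alpha)}\le c\cdot 2^{\IP(\rho:\alpha)}$, so we may take $c_T=\log c$. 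Note that $c$ will not involve $\alpha$, which is merely carried along as the weight $m^\alpha(y)$.

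To prove $F\lea m^\rho$ I would read $F$ as a function of a general oracle: for every $\gamma\in\Omega$ put $F(x,\gamma)=\ave_{P_\gamma(\omega)}m^{\pair{\gamma,\omega}}(x)$. The semimeasure bound is immediate, since $\sum_x F(x,\gamma)=\ave_{P_\gamma(\omega)}\sum_x m^{\pair{\gamma,\omega}}(x)\le1$, because each $m^{\pair{\gamma,\omega}}$ is a semimeasure and $P_\gamma$ is a probability measure. If, in addition, $F$ is lower semicomputable as a family of semimeasures (uniformly, via one oracle machine taking $\gamma$ as its oracle), then item~4 on the maximality of $m^\gamma$ yields $F(x,\gamma)\le c\,m^\gamma(x)$ for a constant $c$ depending only on that machine, hence only on $T$; specializing $\gamma=\rho$ completes the reduction.

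The hard part will be this uniform lower semicomputability. Fix an oracle machine $M$ witnessing lower semicomputability of $m^\bullet(x)$, and let $m^\sigma_{\le s}(x)$ be the value obtained by running $M$ with oracle $\sigma$ for $s$ steps; it is nondecreasing in $s$ with $\sup_s m^\sigma_{\le s}(x)=m^\sigma(x)$. In $s$ steps $M$ can query only boundedly placed bits of $\sigma$, say positions below some $N(s)$ computable from $s$; under the interleaving encoding of $\pair{\gamma,\omega}$ (which we may assume by encoding invariance) these correspond to a finite prefix of $\gamma$, read from the oracle, and a prefix of $\omega$ of length at most $N(s)$. Hence for fixed $\gamma,x,s$ the map $\omega\mapsto m^{\pair{\gamma,\omega}}_{\le s}(x)$ is constant on the cylinders $\Omega_w$ with $|w|=N(s)$, with a value $v_w$ computable from $\gamma$ and $w$ by simulating $M$. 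Therefore
$$
\ave_{P_\gamma(\omega)}m^{\pair{\gamma,\omega}}_{\le s}(x)=\sum_{|w|=N(s)}v_w\cdot P_\gamma(\Omega_w),
$$
a finite sum whose terms are computable relative to $\gamma$ --- the values $v_w$ by simulating $M$, the cylinder probabilities $P_\gamma(\Omega_w)$ by $T$. By the monotone convergence theorem this quantity increases, as $s\to\infty$, to $F(x,\gamma)$, exhibiting $F$ as lower semicomputable relative to $\gamma$ uniformly in $x$. This is precisely where the two design choices pay off: summation rather than supremum in Definition~\ref{def3} keeps the quantity a semimeasure after integrating out $\omega$, while the availability of the cylinder probabilities $P_\gamma(\Omega_w)$ lets each finite stage collapse to an oracle-$\gamma$-computable finite sum. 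Assembling the three reductions then gives the theorem.
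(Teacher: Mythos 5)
Your proposal is correct and follows essentially the same route as the paper's own proof: rewrite $2^{\IP(\cdot:\cdot)}$ via the a priori probability, reduce everything to the single bound $\ave_{P_\rho(\omega)}m^{\pair{\rho,\omega}}(x)\lea m^\rho(x)$, and obtain that bound from the maximality of $m^\rho$ among lower semicomputable families of semimeasures, checking the semimeasure property by swapping sum and integral. The only difference is one of detail: where the paper simply asserts that $f(x,\rho)=\int m^{\pair{\rho,\omega}}(x)\,dP_\rho(\omega)$ is lower semicomputable because $P_\rho$ is computable with oracle $\rho$, you spell out the time-truncated, cylinder-decomposed approximation argument --- a welcome elaboration, but not a different proof.
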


\begin{proof}
Here again the definition 
of mutual information in terms of a priori probability is useful:
       $$
2^{\ia(\alpha:\beta)}%=\log\sum_{x,y\in\Xi}2^{I(x:y)-K^\alpha(x)-K^\beta(y)}.
=\sum_{x,y\in\Xi}\frac {m^\alpha(x)m^\beta(y)m(x,y)}{m(x)m(y)}.
                       $$
Basically the statement of the theorem means
that 
 $$
\ave_{P_\rho(\omega)}\sum_{x,y\in\Xi}\frac {m^{\rho, \omega}(x)
m^{\alpha}(y)m(x,y)}{m(x)m(y)}
                       $$
 is not larger than the sum
  $$
\sum_{x,y\in\Xi}\frac {m^{\rho}(x)
m^{\alpha}(y)m(x,y)}{m(x)m(y)}
                       $$
up to a multiplicative  constant $d_T$ depending on $T$ (and then we let $c_T=\log d_T$).
These sums differ only in the first term in the numerators.
Hence it suffices to show that 
the $P_\rho$-expectation of  $m^{\rho, \omega}(x)$
is at most $d_T\cdot m^{\rho}(x)$:
$$
\int m^{\rho, \omega}(x)\,d P_\rho(\omega)\le d_T\cdot m^{\rho}(x).
$$
Recall that $m^{\rho}(x)$ is a largest (up to a multiplicative constant) lower semicomputable family of semimeasures.
Hence it suffices to show that
$$
f(x,\rho)=\int m^{\rho, \omega}(x)\,d P_\rho(\omega)
$$
is a lower semicomputable family of semimeasures.
Since  $P_\rho$ can be computed by a Turing machine with oracle $\rho$, the function $f(x,\rho)$       
is lower semicomputable.
Thus to finish the proof, it suffices to show that
this function is a family of semimeasures:
  $$
\sum_{x\in\Xi} \int m^{\rho, \omega}(x)\,d P_\rho(\omega)\le1.
$$          
%To this end it  suffices to show that               
%any finite sub-sum of the sum in the lefthand side is at most 1.
%For finite sums 
We can swap summation and integration.
 Notice that $\sum_{x} m^{\rho, \omega}(x)\le1$ for all $\rho,\omega\in\Omega$ and hence
 $$
 \int \Big (\sum_{x} m^{\rho, \omega}(x)\Big)\,d P_\rho(\omega)\le
  \int   1\cdot \,d P_\rho(\omega)=1. \qed
$$
\renewcommand{\qed}{}
\end{proof}

\section{Mutual information and computability}

There is another good property of Definition~\ref{def3}:
if $\alpha$ is computable, then
 $I(\alpha:\beta)=O(1)$ for all $\beta$. 
 %(it is bounded by the complexity of 
 %the program computing $\alpha$). 
 However, the converse is false:
 there are non-computable sequences $\alpha$ with 
 $I(\alpha:\beta)=O(1)$ for all $\beta$.
 
 \begin{theorem}\label{th7}
(a) If $\alpha\in\Omega$ is computed by a program $p$,
 then  $I(\alpha:\beta)\le K(p)+O(1)$ for all $\beta$. 
 (b) There is a non-computable $\alpha\in\Omega$ such that  $I(\alpha:\beta)\le c_\alpha<\infty$ for all $\beta$. 
 \end{theorem}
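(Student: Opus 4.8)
The plan is to route both parts through the a priori probability form of $I$ used in the proofs of Theorems~\ref{th1} and~\ref{th4}, and to reduce each to a single inequality that bounds $m^\alpha$ in terms of $m$. For part (a), the key inequality is $m^\alpha(x)\lea 2^{K(p)}m(x)$, equivalently $K(x)\le K^\alpha(x)+K(p)+O(1)$. This holds because, $\alpha$ being computed by $p$, any $\alpha$-relative description of $x$ can be converted into an unconditional one: prepend a self-delimiting copy of $p$ and, whenever the underlying oracle machine queries a bit of $\alpha$, supply that bit by running $p$. Substituting this bound into $2^{I(\alpha:\beta)}=\sum_{x,y}m^\alpha(x)m^\beta(y)m(x,y)/(m(x)m(y))$ yields
$$
2^{I(\alpha:\beta)}\lea 2^{K(p)}\sum_{y\in\Xi}\frac{m^\beta(y)}{m(y)}\sum_{x\in\Xi}m(x,y)\eqa 2^{K(p)}\sum_{y\in\Xi}m^\beta(y)\le 2^{K(p)},
$$
where the middle step uses item~\ref{p6} and the last uses that $m^\beta$ is a semimeasure; taking logarithms gives (a).

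For part (b) I would first observe that the computation above never used computability of $\alpha$: it used only $m^\alpha(x)\lea m(x)$, i.e. $K^\alpha(x)\ge K(x)-c_\alpha$ for some constant $c_\alpha$ (the sequence $\alpha$ is then said to be \emph{low for $K$}). For any such $\alpha$ the same chain, with $2^{K(p)}$ replaced by the constant hidden in $m^\alpha(x)\lea m(x)$, gives $2^{I(\alpha:\beta)}\lea 1$ and hence $I(\alpha:\beta)\le c'_\alpha$ for all $\beta$. So it remains only to exhibit a \emph{non-computable} low-for-$K$ sequence. Here I would invoke the existence of non-computable $K$-trivial sequences (first constructed by Solovay) together with the theorem of Nies that $K$-triviality coincides with lowness for $K$; see~\cite{suv} and the references therein for background.

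The hard part is precisely this last ingredient: the surrounding computation is routine, but the existence of a non-computable low-for-$K$ sequence is the one genuinely deep fact, and it cannot be avoided. Indeed, keeping in the defining sum only the single term $x=y$ and choosing a $\beta$ whose prefix encodes $y$ (so that $m^\beta(y)\eqa 1$), one gets, using $m(y,y)\eqa m(y)$,
$$
\frac{m^\alpha(y)}{m(y)}\eqa\frac{m^\beta(y)m(y,y)m^\alpha(y)}{m(y)m(y)}\lea 2^{I(\alpha:\beta)}.
$$
Thus the hypothesis $I(\alpha:\beta)\le c_\alpha$ for all $\beta$ forces $m^\alpha(x)\lea m(x)$, so lowness for $K$ is necessary as well as sufficient, and any construction answering (b) must produce such a sequence.
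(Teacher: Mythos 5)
Your proposal is correct and follows essentially the same route as the paper: part (a) rests on the same key inequality $K(x)\le K^\alpha(x)+K(p)+O(1)$ substituted into the a priori probability form of the sum, collapsed via $\sum_x m(x,y)\eqa m(y)$ and $\sum_y m^\beta(y)\le 1$, and part (b) observes, exactly as the paper does, that this computation only needs $K^\alpha(x)\ge K(x)-c_\alpha$ and then invokes the known existence of non-computable sequences with this property (the paper cites Downey--Hirschfeldt directly, rather than going through $K$-triviality and Nies's theorem). Your closing observation that lowness for $K$ is also \emph{necessary} --- obtained by keeping the diagonal term $x=y$ with $\beta=\tilde y$ --- is correct and is a nice addition not present in the paper.
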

 \begin{proof}
  (a) The statement follows from the inequalities 
 $$
 K(x)\le K(x|p)+K(p)+O(1)\le K^\alpha(x) +K(p)+O(1).
 $$
 These inequalities imply that
 the sum in Definition~\ref{def3}
 is at most
 \begin{align*}
 \sum_{x,y\in\Xi}2^{I(x:y)-K(x)+K(p)-K^\beta(y)}=\sum_{x,y}2^{K(y)-K(x,y)+K(p)-K^\beta(y)}\\
 =\sum_y\Big( 2^{K(y)+K(p)-K^\beta(y)} \sum_{x}m(x,y)\Big).
\end{align*}
Since $\sum_{x}m(x,y)\eqa m(y)=2^{-K(y)}$,
this is at most
$$ 
\sum_{y\in\Xi} 2^{K(p)-K^\beta(y)}=2^{K(p)}\sum_{y\in\Xi} 2^{-K^\beta(y)}\le 2^{K(p)}\cdot1.
                       $$

(b) There are non-computable sequences $\alpha$ with  $K^\alpha(x)\ge K(x)-c$
for some  $c$ that does not depend on $x$ (see e.g.  \cite{dh}). For such sequences 
we can repeat the arguments from item (a).
\end{proof}

\end{document}